\newcommand{\congest}{\ensuremath{\mathsf{CONGEST}}\xspace}
\newcommand{\local}{\ensuremath{\mathsf{LOCAL}}\xspace}
\newcommand{\A}{\mathcal{A}}
\newtheorem{theorem}{Theorem}
\newtheorem{lemma}{Lemma}
\begin{document}

\title{Distributed Reconfiguration of Spanning Trees}

\author{Siddharth Gupta\footnote{University of Warwick, United Kingdom. siddharth.gupta.1@warwick.ac.uk} \and Manish Kumar \footnote{Ben-Gurion University of the Negev, Israel. manishk@post.bgu.ac.il} \and Shreyas Pai\footnote{Aalto University, Finland. shreyas.pai@aalto.fi}}
\date{}
\maketitle

\begin{abstract}
In a reconfiguration problem, given a problem and two feasible solutions of the problem, the task is to find a sequence of  transformations to reach from one solution to the other such that every intermediate state is also a feasible solution to the problem. In this paper, we study the distributed spanning tree reconfiguration problem and we define a new reconfiguration step, called \emph{$k$-simultaneous add and delete}, in which every node is allowed to add at most $k$ edges and delete at most $k$ edges such that multiple nodes do not add or delete the same edge.

We first observe that, if the two input spanning trees are rooted, then we can do the reconfiguration using a single $1$-simultaneous add and delete step in one round in the \congest model. Therefore, we focus our attention towards unrooted spanning trees and show that transforming an unrooted spanning tree into another using a single $1$-simultaneous add and delete step requires $\Omega(n)$ rounds in the \local model. We additionally show that transforming an unrooted spanning tree into another using a single $2$-simultaneous add and delete step can be done in $O(\log n)$ rounds in the \congest model.

\end{abstract}

\newpage

\section{Introduction}
A \emph{reconfiguration problem} asks the following computational question: Given two different configurations of a system, is it possible to transform one to the other in a step-by-step fashion such that the intermediate solutions are also feasible?
Spanning trees are important in classic distributed models such as \local{} and \congest{}\footnote{In the \local\ model \cite{linial92}, a communication network is abstracted as an $n$-node graph. In synchronous \emph{rounds} each node can send an arbitrary size message to each of its neighbors. The \congest\ model \cite{peleg00} is similar to the \local model with the additional constraint that each message has size $O(\log n)$ bits.} as they can be used for efficient routing and aggregation. It is desirable to change the current spanning tree to a better spanning tree depending on the routing demands. Each node can just delete the old incident edges and add the new edges, but this is resource intensive as some nodes may have to simultaneously change a lot of incident edges. Efficiently computing a reconfiguration schedule for spanning trees in a distributed manner allows the system to change from one spanning tree to another in a way that each node is responsible for initiating only a limited amount of changes in one step. And since each intermediate structure is a spanning tree, these intermediate structures can be used to perform the required operations till the next steps are performed.

In the distributed spanning tree reconfiguration problem, we have two spanning trees $T_1, T_2$ of a graph $G$ such that each node $v \in V$ knows its incident edges in $T_1$ and $T_2$. The nodes need to efficiently compute a reconfiguration schedule that converts $T_1$ to $T_2$ using \emph{$k$-simultaneous add and delete steps}, where in each step, each node is allowed to add at most $k$ incident edges to the spanning tree and delete at most $k$ incident edges from the spanning tree. In any given step, multiple nodes cannot add or delete the same edge. A valid reconfiguration schedule is a sequence of steps where we start from $T_1$ and reach $T_2$ such that the intermediate structure obtained after each step is a spanning tree.

If $T_1$ and $T_2$ are rooted spanning trees, where each node knows its parent pointer, then each node $v$ can tell its neighbours that it wants to add its parent in $T_2$ and delete its parent in $T_1$. If $v$ sees that its parent wants to do the opposite operation on the same edge, it does nothing. Hence each edge is added or deleted by at most one node. Therefore, in this setting, the nodes can compute in $1$-round, a reconfiguration schedule using a single $1$-simultaneous add and delete step. But in the case of unrooted trees, this strategy fails as it crucially relies on the parent pointer information to coordinate between the nodes. Therefore, the natural question arises: what can we do in the case of unrooted spanning trees? In this work,\textbf{} we present two results that answer this question:
\begin{enumerate}
    \item A lower bound that shows computing a single step $1$-simultaneous add and delete reconfiguration schedule requires $\Omega(n)$ rounds in the \local model.
    \item An algorithm that computes a single step $2$-simultaneous add and delete reconfiguration schedule in $O(\log n)$ rounds in the \congest model.
\end{enumerate} 

\subsection{Related work}

The problem of spanning tree reconfiguration is very well studied in the centralized setting. A transformation step in the centralized setting is defined as follows: two spanning trees $T$ and $T'$ of a graph $G$ are reachable in one step iff there exists two edges $e \in T$ and $e' \in T'$ such that $T' = (T \setminus e)\cup e'$. In the centralized setting, any spanning tree can be reconfigured into any other spanning tree in polynomial time~\cite{ItoDHPSUU11} and finding a \emph{shortest reconfiguration sequence} between two directed spanning trees is polynomial-time solvable~\cite{ItoIKNOW21}. Therefore, more constrained versions of the problem have been studied. For instance, the reconfiguration problem is PSPACE-complete when each spanning tree in the sequence has \emph{at most (and at least) $k$ leaves} (for $k \geq 3$)~\cite{BousquetI0MOSW20}. On the other hand, reconfiguration is polynomial-time solvable if the intermediate spanning trees are constrained to have large maximum degree and small diameter while it is PSPACE-complete if we have small maximum degree constraints and NP-hard with large diameter constraints~\cite{Bousquet22}. 

The only previous work on distributed spanning tree reconfiguration that we are aware of is~\cite{YamauchiKO21}. In this work, the authors show how to solve reconfiguration of rooted spanning tree in an asynchronous message passing system using local exchange operation between pairs of incident edges, in $O(n)$ rounds and requires $O(\log n)$ bits memory at each process. 
Distributed reconfiguration has been studied for Coloring~\cite{BonamyORSU18,BousquetF21}, Vertex Cover~\cite{HillelMP22}, and MIS~\cite{HillelR20}.

\section{Distributed Spanning Tree Reconfiguration}
In this section, we prove our results for distributed spanning tree reconfiguration. First we show a lower bound of $\Omega(n)$ rounds for computing a single step $1$-simultaneous add and delete reconfiguration schedule in the \local model. And then we present an algorithm that computes a single step $2$-simultaneous add and delete reconfiguration schedule in $O(\log n)$ rounds in the \congest model.

\subsection{$1$-Simultaneous Add and Delete requires $\Omega(n)$ Rounds}

We begin by stating a folk result which will be the basis for the proof of the main lower bound in the subsequent theorem. We state our lower bounds in the \local model, but since any \congest algorithm is also a \local algorithm, the lower bound also holds in the \congest model.
\begin{lemma}\label{lem:rootinglb}
Rooting a tree $T$ at an arbitrary node is a global problem, i.e. it requires $\Omega(n)$ rounds in the \local model.
\end{lemma}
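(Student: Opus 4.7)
The plan is to prove this folklore lower bound by a standard indistinguishability argument on a path. Since a path is itself a tree, it suffices to establish the bound on $P_n = v_1 - v_2 - \cdots - v_n$. I would first observe that rooting reduces to unique-leader marking: in any valid rooted tree, the root is the unique node with no outgoing parent pointer, so an $r$-round rooting algorithm in \local{} immediately yields an $r$-round algorithm that deterministically marks a unique node of $P_n$ (each node simply checks whether its output designates it as the root). Thus it suffices to show that marking a unique node of a path requires $\Omega(n)$ rounds.

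For the marking lower bound, suppose towards contradiction that a deterministic algorithm $\A$ marks a unique node of $P_n$ in $r = o(n)$ rounds. By the standard rigidity of \local{} algorithms, each node's marking bit depends only on its $r$-ball together with the identifiers therein. I would construct two identifier assignments $\sigma_1, \sigma_2$ of $P_n$ that coincide on a central window $W$ of $\Theta(n)$ consecutive nodes around $v_{\lfloor n/2 \rfloor}$ but differ outside $W$ by a left--right swap of the extremal identifier halves (for instance, placing the smallest $\Theta(n)$ identifiers on the left end under $\sigma_1$ and on the right end under $\sigma_2$). Since $r$ is sublinear in $n$, every node in the interior of $W$ has an identical $r$-view under the two assignments and therefore produces the same marking bit.

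The contradiction then comes from a hybrid assignment $\sigma_3$ that glues the left-end identifiers of $\sigma_1$, the right-end identifiers of $\sigma_2$, and the unchanged window $W$: by LOCAL rigidity, every node outside $W$ outputs in $\sigma_3$ exactly the marking bit it had in its source instance. If $\A$ places its mark at opposite ends of $P_n$ in $\sigma_1$ and $\sigma_2$, then $\sigma_3$ contains two distinct marks, violating uniqueness. The main obstacle is ensuring that $\A$'s marks actually lie on opposite ends of $\sigma_1$ and $\sigma_2$ to begin with --- $\A$ might try to escape by always marking a node \emph{inside} the central window $W$. This is handled by iterating the swap construction inside $W$ itself: if $\A$'s behavior is insensitive to the extremal ends, then a further indistinguishable pair of assignments confined to $W$ moves the mark around within $W$, and repeating this at multiple scales eventually exposes a pair of inputs for which the marks must land at opposite ends, giving the desired contradiction and hence $r = \Omega(n)$.
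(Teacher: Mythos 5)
Your reduction from rooting to ``mark a unique node of a path'' is fine, but the lower bound you then sketch for unique marking has a genuine gap, and it is precisely the step where your route diverges from the paper's. Your contradiction only materializes if you can exhibit two assignments $\sigma_1,\sigma_2$ (agreeing on the window $W$, identifier-disjoint outside it) whose unique marks land in the two outer blocks that survive into the hybrid $\sigma_3$. Nothing in your argument produces such a pair: the algorithm may always mark a node whose $r$-ball lies entirely inside $W$, or more generally always mark in whichever block does \emph{not} get glued into the hybrid. Your proposed escape --- ``iterate the swap construction inside $W$ at multiple scales'' --- is not an argument: an indistinguishable pair that merely \emph{moves} the unique mark around inside $W$ never yields two simultaneous marks, and you do not specify what the recursion recurses on, why it terminates, or what contradiction it ends in. There is also a concrete bug in the construction as written: if $\sigma_1$ places the smallest identifiers on the left and $\sigma_2$ places the same identifiers on the right, then $\sigma_3$ (left of $\sigma_1$, window, right of $\sigma_2$) uses that identifier block twice and is not a valid instance; you need at least three pairwise disjoint identifier blocks.

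The paper avoids this difficulty by \emph{not} collapsing the output to a single marked node. With parent pointers, every node of the path outputs a direction, and in a correct rooting all pointers must point toward the one root; hence in any single run there are $\Theta(n)$ windows whose identifier patterns force ``point left'' (those right of the root) and, in that run or its reflection (or a run on a disjoint identifier set), windows forcing ``point right.'' Planting one identifier-disjoint window of each orientation in a fresh path so that the forced pointers point away from each other immediately yields either two roots or an edge claimed as a parent edge by neither endpoint --- a contradiction that holds no matter where the root falls. In short: the paper's argument gets a constraint from \emph{every} node and so never has to control where a single special node lands, which is exactly the part your marking-based argument cannot currently handle. To salvage your route you would need a genuine proof of the leader-election lower bound on a path (e.g.\ a reflection-plus-pigeonhole argument over disjoint identifier sets), which is more delicate than the sketch suggests; as it stands, the proposal is incomplete.
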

\begin{proof}[sketch]
Let $T$ be a path on $n$ nodes. If each node can output its parent pointer by just looking at the nodes at $o(n)$ radius around it, then we can set the ID's of the nodes in two $o(n)$ length regions of $T$ that are distance $n/10$ apart in such a way that the parents are pointing away from each other in the path connecting the two neighbourhoods. This forces either more than one root or one node to have more than one parent pointers, a contradiction.
\end{proof}

\begin{figure}[tb]
\centering 
\includegraphics[page=1,width=0.9\textwidth]{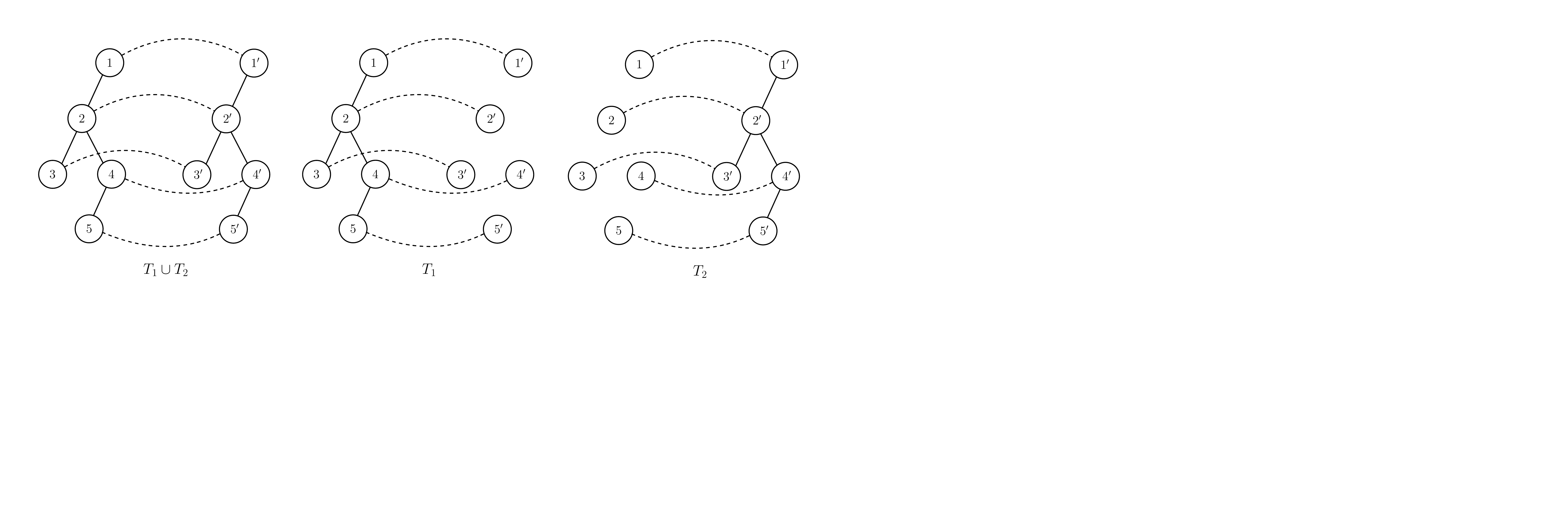}
\caption{This figure illustrates the reduction from rooting to spanning tree reconfiguration. The input $T$ to the rooting problem is the tree spanned by nodes $1$-$5$ in the left graph $T_1 \cup T_2$, which is the input graph we create for the reconfiguration problem. The source spanning tree $T_1$ is in the middle and the destination spanning tree $T_2$ is on the right.}
\label{f:Figure1}
\end{figure}

\begin{theorem}
Solving the distributed spanning tree reconfiguration problem in one step of $1$-simultaneous add and delete requires $\Omega(n)$ rounds in the \local model.
\end{theorem}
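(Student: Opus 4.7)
The plan is to reduce from Lemma~\ref{lem:rootinglb} by constructing, from a hard rooting instance on a path $P$, an instance of the reconfiguration problem whose every valid schedule encodes a rooting of $P$. An $o(n)$-round reconfiguration algorithm would then yield an $o(n)$-round rooting algorithm, contradicting the lemma.

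For the construction I would use a ``ladder'' graph $G$ on vertex set $\{u_1, \ldots, u_n\} \cup \{w_1, \ldots, w_n\}$ with rung edges $\{u_i, w_i\}$, top edges $\{u_i, u_{i+1}\}$, and bottom edges $\{w_i, w_{i+1}\}$. Let $T_1$ consist of all rungs together with the top path, and $T_2$ consist of all rungs together with the bottom path; both are spanning trees. Given any path $P = v_1 v_2 \cdots v_n$, each $v_i$ simulates the pair $(u_i, w_i)$, using locally-derived distinct IDs. Since every edge of $G$ lies between vertices simulated by the same $v_i$ or by adjacent $v_i$'s in $P$, one round of any \local algorithm on $G$ costs exactly one round on $P$.

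Next, I would analyze the combinatorics of a single $1$-simultaneous add-and-delete step for this instance. The required deletions are exactly the $n-1$ top edges; by the ``no two nodes touch the same edge'' rule, each such edge is deleted by one of its two endpoints, which must be some $u_i$ (no $w_j$ is incident to a top edge). Each internal $u_i$ is incident to two top edges but has delete capacity $1$, so it is assigned at most one. Summing over $u_1, \ldots, u_n$, the total capacity is $n$ while the number of required deletions is $n-1$, so exactly one index $r$ has $u_r$ performing no deletion, and every other $u_i$ deletes a unique incident top edge.

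To finish, I would decode this assignment into a rooting of $P$ at $v_r$: set the parent of each $v_i \neq v_r$ to $v_j$, where $\{u_i, u_j\}$ is the edge that $u_i$ deletes. Propagating the capacity constraint outward from $u_r$ shows that for $i < r$ the edge $\{u_i, u_{i+1}\}$ must be assigned to $u_i$, and for $i > r$ the edge $\{u_{i-1}, u_i\}$ must be assigned to $u_i$, so every parent pointer heads towards $u_r$ and the decoded structure is a valid rooting. Hence a $T$-round reconfiguration algorithm gives a $(T + O(1))$-round rooting algorithm, and Lemma~\ref{lem:rootinglb} forces $T = \Omega(n)$. The crux I anticipate is the combinatorial argument that capacity-$1$ deletion assignments on the top path are necessarily of this ``one free node'' form; once this is established, the simulation and decoding are routine.
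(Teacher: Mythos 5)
Your proposal is correct and follows essentially the same route as the paper: a reduction from the rooting lower bound in which the input tree is doubled into two spanning trees joined by a perfect matching of ``rungs'' (your ladder is exactly the paper's construction $T_1 = (V \cup V', E \cup M)$, $T_2 = (V \cup V', E' \cup M)$ specialized to a path), followed by the same capacity-one counting and propagation argument that decodes the forced deletions into parent pointers rooted at the unique non-deleting node. The only cosmetic difference is that the paper states the reduction for arbitrary trees while you restrict to paths, which suffices because the hard instance in Lemma~\ref{lem:rootinglb} is a path.
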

\begin{proof}
For sake of contradiction let $A$ be a \local algorithm that computes a one step reconfiguration schedule in $o(n)$ rounds. We will show that $A$ can be used to root an unrooted tree in $o(n)$ rounds in the \local model. Let $T = (V, E)$ be the tree that we wish to root. For each node $v \in V$ create a copy $v'$ which will be simulated by $v$, and add edges $\{v, v'\}$ as well as $\{u', v'\}$ for all neighbours $u$ of $v$. Let $V'$ be the set of all the nodes $v'$, $E'$ be the set of edges of the form $\{u', v'\}$, and $M = \{\{v, v'\} \mid v \in V\}$. Now we want to run algorithm $\A$ where the source spanning tree is $T_1 = (V \cup V', E \cup M)$, the destination spanning tree is $T_2 = (V \cup V', E' \cup M)$, and the communication network is $T_1 \cup T_2$. An example of this reduction is shown in Figure~\ref{f:Figure1}.

Any $R$-round \local algorithm on $T_1 \cup T_2$ can be simulated on network $T$ in $R$ rounds by having $v$ simulate the behaviour of $v'$. Since $\A$ produces a reconfiguration schedule that uses one step of $1$-simultaneous add and delete, each node will delete at most one edge and add at most one edge in order to go from $T_1$ to $T_2$. Node $v$ will output as its parent the edge in $E$ that is to be deleted by $v$, if such an edge exists.

These parent pointers correspond to a valid rooting because nodes in $V$ must delete $n-1$ edges of $T_1$ in one step for the reconfiguration schedule of $\A$ to be correct. This is only possible if $n-1$ nodes of $T$  delete exactly one incident edge of $T$ and the remaining node $r$ does not delete any incident edge. The neighbours of $r$ in $T$ must delete the incident edge that is pointing to $r$ as nobody else can delete this edge. Then we can repeat this argument inductively on all nodes that are $i$-hops away from $r$ and we show that the parent pointers form a valid rooting of $T$ with root node $r$.

Thus, a rooting of $T$ was output in the \local model in $o(n)$ rounds, which is impossible by Lemma~\ref{lem:rootinglb}. Thus $\A$ cannot exist, which proves the theorem.
\end{proof}

\subsection{$2$-Simultaneous Add and Delete in $O(\log n)$ Rounds}
We first describe an edge orientation process that is essentially the rake and compress algorithm\footnote{See \url{https://discrete-notes.github.io/rake-and-compress} for an nice description of the rake and compress algorithm} 
of~\cite{DBLP:journals/acr/MillerR89}. The output of Algorithm~\ref{alg:peeling} has the property that each node has at most two outgoing edges.
It is well known that this orientation can be computed very efficiently as opposed to a rooting of the tree.

\RestyleAlgo{boxruled}
\begin{algorithm2e}\caption{\textsc{Orient}$(T)$\label{alg:peeling}}
$T' \leftarrow $ empty graph \\
\While{$T \neq \emptyset$} {
    $H \leftarrow $ nodes in $T$ with degree at most $2$\\
    Add to $T'$ the nodes of $H$ with their incident edges in $T$ oriented outward, breaking ties arbitrarily \\
    Remove $H$ from $T$
}
\Return\ oriented tree $T'$
\end{algorithm2e}

\begin{lemma}
The while loop of Algorithm~\ref{alg:peeling} runs for $O(\log n)$ iterations. Moreover each iteration can be implemented in $O(1)$ rounds in the \congest model.
\end{lemma}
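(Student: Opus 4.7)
My plan is to prove the two halves of the lemma separately. The iteration count follows from a counting argument showing that $T$ shrinks by a constant factor each iteration, while the per-iteration round complexity is a straightforward local simulation using $O(1)$-bit messages.

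\emph{Bounding the iterations.} Let $m$ be the current number of nodes in $T$, which is in general a forest from the second iteration onward. Let $I$ denote the number of nodes of degree at least $3$. The handshake lemma applied to the forest gives $\sum_v \deg(v) = 2|E(T)| \leq 2m$, and since each of the $I$ high-degree nodes contributes at least $3$ to this sum, we get $3I \leq 2m$, so $I \leq 2m/3$. Hence $|H| = m - I \geq m/3$, and after one iteration $|T|$ has shrunk by a factor of at least $2/3$. Iterating, the loop terminates within $O(\log n)$ steps.

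\emph{Round complexity per iteration.} The key observation is that every node only needs its current degree in $T$ to decide whether it belongs to $H$, and this degree can be maintained locally: each node starts knowing its degree in $T_1$, and whenever a neighbor leaves $T$ that neighbor sends a single-bit notification along the incident edge so that the surviving node can decrement its counter. A single iteration then consists of each low-degree node marking itself for $H$, recording its at-most-two incident edges as outward-oriented (a purely local operation, with ties broken arbitrarily), and notifying its neighbors that it is leaving; the surviving nodes update their counters accordingly. All messages are $O(1)$ bits and everything finishes in $O(1)$ rounds of \congest.

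The main obstacle, such as it is, is the counting step, and specifically the care needed because $T$ becomes a forest after the first iteration. The textbook bound $|E(T)| \leq m - 1$ has to be weakened to $|E(T)| \leq m$, but this costs nothing asymptotically because the inequality $3I \leq 2m$ still yields a constant-factor reduction. Once that is noted, the rest of the argument reduces to a routine local simulation of degree updates over the edges of $T \subseteq G$.
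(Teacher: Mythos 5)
Your proposal is correct and follows essentially the same route as the paper: the iteration bound is the same average-degree/handshake counting argument showing at least a $1/3$ fraction of nodes have degree at most $2$ in any forest, and the $O(1)$-round implementation is the same local degree-maintenance scheme where departing nodes notify their neighbors. Your explicit remark that $T$ is a forest (not a tree) after the first iteration is a point the paper also covers, so there is nothing to add.
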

\begin{proof}
If $\alpha$ fraction of the nodes have degree at most $2$ then we can write average degree as at least $3(1-\alpha)$ because $(1-\alpha)$ fraction of the nodes must have degree at least $3$. Average degree of a tree (or a forest) is $2$, which implies $2 \ge 3(1-\alpha)$. So an $\alpha \ge 1/3$ fraction of the nodes are removed in each iteration. Therefore the number of iterations is at most $O(\log n)$.

To execute an iteration in the \congest model, each node just needs to know its degree in the current tree $T$. So in the each iteration, nodes in $H$ that remove themselves can send a message to their neighbours to decrease their degree.
\end{proof}
\begin{figure}[tb]
\centering 
\includegraphics[page=2,width=0.9\textwidth]{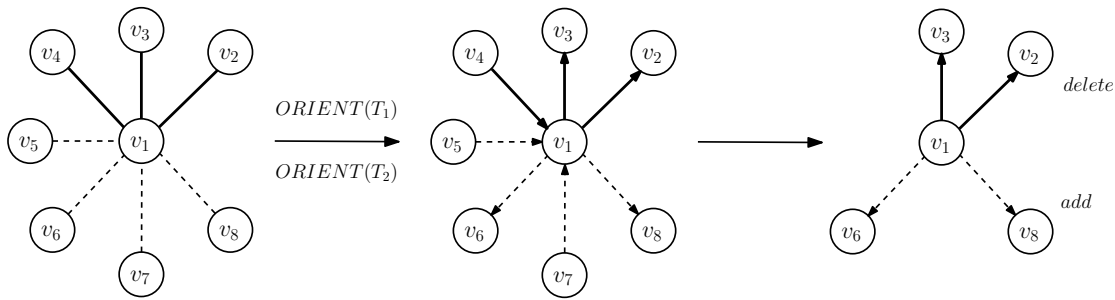}
\caption{This figure shows how node $v_1$ computes reconfiguration schedule using the orientation procedure. The bold edges belong to $T_1$ and the dashed edges belong to $T_2$}
\label{f:Figure2}
\end{figure}
Now we show how this orientation can be used to compute a reconfiguration schedule. We run Algorithm~\ref{alg:peeling} on the source spanning tree $T_1$ and the target spanning tree $T_2$ separately and obtain two orientations such that each node in the graph has at most $2$ outgoing edges in $T_1$ and at most $2$ outgoing edges in $T_2$ (see  Figure~\ref{f:Figure2}). Now, each node $v$ decides it will add its outgoing edges of $T_2$ and it will delete its outgoing edges of $T_1$. If $v$ decides to add and delete the same incident edge $e$, it updates its decision to not change $e$. Then $v$ sends the decisions along the outgoing edges. If for a single edge, one end point has decided to add and the other has decided to delete, then both nodes update their decision on this edge to do nothing. It is easy to see that these decisions form one step of $2$-simultaneous add and delete where all edges of $T_1 \setminus T_2$ are deleted and all edges of $T_2 \setminus T_1$ are added. Therefore, we have computed a one step reconfiguration schedule in $O(\log n)$ rounds of the \congest model.
This proves the following theorem.

\begin{theorem}
The distributed spanning tree reconfiguration problem can be solved in one step of $2$-simultaneous add and delete. Computing this reconfiguration schedule takes $O(\log n)$ rounds in the \congest model.
\end{theorem}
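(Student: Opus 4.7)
The plan is to apply Algorithm~\ref{alg:peeling} separately to $T_1$ and to $T_2$, yielding two orientations in which every node has out-degree at most $2$; by the preceding lemma this costs $O(\log n)$ rounds of \congest. Each node $v$ then tentatively proposes to \emph{delete} every edge that is outgoing from $v$ in the oriented $T_1$, and to \emph{add} every edge that is outgoing from $v$ in the oriented $T_2$. Since orientations are determined by the rake-and-compress peeling, each edge $e$ has a unique ``tail'' endpoint in each orientation, so each edge is proposed for deletion by at most one node and for addition by at most one node.

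Next I would describe the conflict-resolution phase, which handles the edges of $T_1 \cap T_2$. Two kinds of conflicts can occur for such an edge $e$. First, the same node $v$ might be the $T_1$-tail and the $T_2$-tail of $e$; in that case $v$ detects this locally and cancels both actions on $e$. Second, the $T_1$-tail of $e$ might be a node $v$ while the $T_2$-tail is the other endpoint $u$; here $v$ announces its delete-proposal along $e$ and $u$ announces its add-proposal along $e$, and upon seeing the opposing announcement both nodes cancel their action on $e$. This exchange costs one additional round in \congest since each message is a single bit-label attached to an edge.

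To verify correctness, I would go case-by-case on each edge of $T_1 \cup T_2$. For $e \in T_2 \setminus T_1$: only the $T_2$-tail of $e$ makes a proposal on $e$ (an add), no conflict is triggered, and so $e$ is added by exactly one node. Symmetrically, each $e \in T_1 \setminus T_2$ is deleted by exactly one node. For $e \in T_1 \cap T_2$: the two cases above ensure that all proposals on $e$ are cancelled, so $e$ remains. Hence the graph after the step equals $T_2$. Each node adds at most its $T_2$-out-degree many edges and deletes at most its $T_1$-out-degree many edges, giving the required $2$-simultaneous bound, and the uniqueness of tails ensures no edge is added or deleted by two different nodes.

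The main step to get right is the second conflict case: one must check that when $v$ deletes and $u$ adds the same edge $e \in T_1 \cap T_2$, the single-round exchange suffices for both endpoints to reach the same decision (keep $e$), so that the resulting edge set is exactly $T_2$ and not some inconsistent structure. Everything else is essentially bookkeeping once the out-degree-$2$ orientation is in hand, and the round complexity is dominated by the $O(\log n)$ rounds of \textsc{Orient} plus $O(1)$ rounds of local announcement.
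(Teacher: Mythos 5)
Your proposal is correct and follows essentially the same route as the paper: orient $T_1$ and $T_2$ separately via \textsc{Orient} to get out-degree at most $2$, have each node propose to delete its $T_1$-outgoing edges and add its $T_2$-outgoing edges, and cancel the two kinds of conflicts on $T_1 \cap T_2$ edges (same tail locally, different tails via one round of exchange along outgoing edges). Your case analysis on $T_1 \setminus T_2$, $T_2 \setminus T_1$, and $T_1 \cap T_2$ is in fact slightly more explicit than the paper's own justification.
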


\section{Future Work}
Our results only pertain to computing \emph{single step} schedules for the distributed spanning tree reconfiguration problem. A natural next question is what happens if we want to compute reconfiguration schedules with multiple steps? In particular, how many rounds do we need to compute a multi-step $1$-simultaneous add and delete reconfiguration schedule?

While our lower bound of $\Omega(n)$ rounds for computing a single step $1$-simultaneous add and delete reconfiguration schedule is tight, we don't know if our $O(\log n)$ round algorithm for computing a single step $2$-simultaneous add and delete reconfiguration schedule is optimal. Can we show a matching lower bound or design a faster algorithm in this case?

\bibliographystyle{plain}
\bibliography{mybibliography}

\end{document}